\documentclass[12pt]{amsart}\sloppy
\usepackage{amssymb}
\usepackage[usenames]{color}
\usepackage{comment}
\usepackage{hyperref}
\usepackage{qcircuit}
\usepackage{url}
\parskip1ex

\newtheorem{theorem}{Theorem}
\newtheorem{corollary}[theorem]{Corollary}
\newtheorem{lemma}[theorem]{Lemma}
\newtheorem{proposition}[theorem]{Proposition}

\theoremstyle{definition}
\newtheorem{definition}[theorem]{Definition}
\newtheorem{proviso}[theorem]{Proviso}

\theoremstyle{remark}

\newtheorem{remark}[theorem]{Remark}

\newenvironment{lquote}
  {\list{}{\leftmargin=1.5em\rightmargin=1em}\item[]}%
  {\endlist}
\newenvironment{q}
  {\begin{lquote}}
  {\end{lquote}}

\newcommand{\Ar}{\medskip\noindent\textbf{A:\ }}

\newcommand{\C}{\mathbb C}

\newcommand{\cnot}{\ensuremath{\text{cNot}}}

\newcommand{\Fun}{\text{Fun}}

\newcommand{\ket}[1]{\ensuremath{|#1\rangle}}

\newcommand{\ox}{\ensuremath{\otimes}}

\renewcommand\phi{\varphi}

\newcommand{\Qn}{\medskip\noindent\textbf{Q:\ }}

\newcommand{\s}{\mathcal S}

\newcommand{\Source}{\text{Source}}
\newcommand{\swap}{\text{Swap}}

\newcommand{\Target}{\text{Target}}

\newcommand{\Val}{\text{Val}}

\newcommand{\X}{\mathcal X}
\newcommand{\x}{\times}
\newcommand\xqed[1]{%
  \leavevmode\unskip\penalty9999 \hbox{}\nobreak\hfill
  \quad\hbox{#1}}
\newcommand\tqed{\xqed{$\triangleleft$}}

\title{Circuit pedantry}
\author{Andreas Blass and Yuri Gurevich}

\begin{document}
\thispagestyle{empty}

\begin{abstract}
Boolean and quantum circuits have commonalities and differences. To formalize the syntactical commonality we introduce syntactic circuits where the gates are black boxes. Syntactic circuits support various semantics. One semantics is provided by Boolean circuits, another by quantum circuits. Quantum semantics is a generalization of Boolean but, because of entanglement, the generalization is not straightforward. We consider only unitary quantum circuits here.
\end{abstract}
\maketitle

\section{Introductory dialog} 
\label{sec:intro}

\noindent\textbf{Q\footnotemark:\ } Tell me please what quantum circuits are exactly.
\footnotetext{Quisani, a former student of the second author}

\medskip\noindent\textbf{A\footnotemark:\ }
Can't you consult a textbook on quantum computations?
\footnotetext{The authors speaking one at a time}

\Qn What textbook? Maybe I am being silly or unlucky, but I looked up many textbooks on quantum computations, including the standard text \cite{NC}. Nobody seems to define quantum circuits carefully. Furthermore, nobody seems to define reversible Boolean circuits carefully, at least in the books that I got hold of \cite{Alrabadi,Devos,Morita,Perumala}.

\Ar Four of our Michigan colleagues wrote a good paper on the synthesis of reversible Boolean circuits \cite{SPMH}. They say that a gate is reversible if ``the (Boolean) function it computes is bijective" and that a reversible Boolean circuit is ``an acyclic combinational logic circuit in which all gates are reversible, and are interconnected without fanout."

\Qn Hmm, I have never heard of combinational logic circuits.

\Ar The terminology seems to be used primarily in electrical engineering. But tell us more about what bothers you.

\Qn Presumably, quantum circuits constitute a straightforward generalization of reversible Boolean circuits. But the generalization cannot be too straightforward. A Boolean circuit, furnished with input, allows you to assign Boolean values to each edge of the circuit. As a result it is crystal clear that different ways to evaluate a given circuit on a given input produce the same output. In the case of a quantum circuit with input, you can assign values to edges but, as far as I can see,  this isn't nearly as useful as in the Boolean case. Everybody seems to consider it obvious that different ways to evaluate a given quantum circuit on a given input produce the same output. This is probably true but it needs a proof.
Besides, what is the Boolean analog for measurements?

\Ar We hear you. It is all about circuit pedantry. But let's forget about measurements for the time being, so that our quantum circuits are unitary in the sense that the transformation performed by any gate is unitary.

\Qn OK, I'll bug you about measurements later. Give me a general plan of circuit pedantry as you see it.

\Ar Boolean circuits and unitary quantum circuits have much in common, especially if you abstract from what transformations are assigned to circuit gates. The graph-theoretical foundations are very similar. In this connection, in \S\ref{sec:syntax}, we introduce \emph{syntactic circuits}. The definition of syntactic circuits simplifies in the case of circuits underlying reversible Boolean circuits and unitary quantum circuits; we call them \emph{balanced circuits}. Syntactic circuits support various semantics.

In \S\ref{sec:bool} we look at the semantics of general Boolean circuits and the simplified semantics of Boolean circuits which are \emph{balanced} in the sense that the underlying syntactic circuits are balanced. Reversible Boolean circuits are special balanced Boolean circuits.

In \S\ref{sec:quantum} we look at the semantics of unitary quantum circuits. We'll verify that different evaluations of such a circuit on a given input indeed produce the same result.

\section{Syntactic circuits} 
\label{sec:syntax}

In this section we define a syntactical notion of circuit. It is a graphical structure involving gates which are treated as black boxes.
The Boolean semantics and quantum semantics of circuits will be defined in subsequent sections.

\subsection{Definition}\mbox{} 

Recall that a \emph{directed multigraph} is a 4-tuple $(V,E,\Source,\Target)$ where $V$ is a set of vertices, also called nodes, $E$ is a set of edges, and Source, Target are functions of type $E\to V$ which assign to each edge its source and target nodes respectively. The multigraph is finite if $V$ and $E$ are finite.

A (nonempty directed) \emph{walk} in a multigraph is a nonempty sequence $e_1, e_2, \dots, e_n$ of edges which connects nodes $x_0, x_1, \dots x_n$ so that $\Source(e_i) = x_{i-1}$ and  $\Target(e_i) = x_i$ for every edge $e_i$. We say that this walk is a walk \emph{from $x_0$ to $x_n$}, and that all the nodes $x_0, x_1, \dots, x_n$ are \emph{involved} in the walk. The walk is a \emph{path} if all the nodes are distinct; it is a cycle if the nodes are distinct except  that $x_n = x_0$. The multigraph is \emph{acyclic} if it has no cycles. In an acyclic multigraph, every walk is a path.

\begin{definition}[Syntactic circuits]
\label{def:circuit}
A \emph{syntactic circuit} or simply a \emph{circuit} is a finite acyclic directed multigraph with no isolated nodes and with some additional structure as follows.
\begin{itemize}
\item The nodes are classified as \emph{input nodes}, \emph{output nodes}, and \emph{gates}.
\item The input
    nodes are linearly ordered, and each of them has at least one outgoing edge and no incoming edge. The edges from input nodes are \emph{input edges}.
\item The output nodes are linearly ordered, and each of them has exactly one incoming edge and no outgoing edge. The edges to output nodes are \emph{output edges}.
\item Every gate has at least one incoming edge and at least one outgoing edge. \tqed
\end{itemize}
\end{definition}

We say that a node $x_1$ of a circuit is \emph{earlier} than another node $x_2$ if there is a path from $x_1$ to $x_2$. This relation is a (strict) partial order because the circuit is acyclic.

The \emph{depth} of a circuit is the maximum number of gates involved in any path.

\begin{q}
\Qn Why do you insist on ordering the input and output nodes?

\Ar These orderings are used in computations. Because of these orderings, the inputs and outputs of Boolean circuits are tuples of Boolean values rather than indexed sets of Boolean values. In a spirit of describing rather than prescribing how to work with circuits, we employ the orderings. \tqed
\end{q}

\subsection{Balanced circuits} 
\label{sec:balance}

\begin{definition}
A gate of a circuit is \emph{balanced} if the number of incoming edges equals the number of outgoing edge. That number is the \emph{arity} of the gate. \tqed
\end{definition}

It is common to represent a balanced $r$-ary gate $G$ by a diagram

\hspace{4em}
\Qcircuit @C=2em @R=0em {
& \multigate{5}{G} & \qw \\
& \ghost{G} & \qw \\
& \ghost{G} & \qw \\
& \ghost{G} & \qw \\
& \ghost{G} & \qw \\
& \ghost{G} & \qw
}

\medskip\noindent
with $r$ incoming edges on the left and $r$ outgoing edges on the right, so that time flows left to right.

\smallskip
\begin{definition}
A circuit is \emph{balanced} if
\begin{enumerate}
\item all its gates are balanced, and
\item the number of input nodes coincides with the number of output nodes. \tqed
\end{enumerate}
\end{definition}

The number of input nodes is the \emph{width} of a balanced circuit.

The diagram representation of balanced gates naturally extends to balanced  circuits. Here is a simple example \cite{EF} of a width-three circuit

\medskip\hspace{4em}
\Qcircuit @C=1em @R=.7em {
& \multigate{1}{G_1} & \sgate{G_2}{2} & \qw \\
& \ghost{G_1} & \qw & \qw\\
& \qw & \gate{G_2} & \qw \\
}\\[1em]
with two binary gates. Let us use this example to illustrate convenient terminology that we will be using.

There are three horizontal strata of edges of the diagram. We follow common terminology calling these strata \emph{timelines}. (This terminology is not intended to mean that some physical entity is propagating along each line.) We number the timelines from top to bottom in diagrams.
Both gates $G_1$ and $G_2$ encounter timeline~1. $G_1$ encounters timeline~2 but $G_2$ doesn't. $G_2$ encounters timeline~3 but $G_1$ doesn't. We say that a gate is \emph{active} on the timelines it encounters. Thus gate $G_2$ is active on timelines 1 and 3.

\begin{proposition}
In any balanced circuit, the input nodes have exactly one outgoing edge each.
\end{proposition}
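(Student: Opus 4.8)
The plan is to run a double-counting (handshaking-style) argument on the edges. Every edge has exactly one source and exactly one target, so the total number $|E|$ of edges may be computed in two ways: by summing, over all nodes $v$, the out-degree $\text{out}(v)$ (the number of edges with source $v$), and alternatively by summing the in-degree $\text{in}(v)$ (the number of edges with target $v$). The idea is to set these two totals equal and exploit the node constraints so that the gate contributions cancel, leaving information about the input nodes alone.

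First I would tabulate the contributions by node type. Write $w$ for the width, so there are $w$ input nodes and, by clause~(2) of the definition of a balanced circuit, exactly $w$ output nodes as well. Counting $|E|$ by sources, the input nodes contribute $\sum_{\text{input }v}\text{out}(v)$, the output nodes contribute nothing since they have no outgoing edge, and the gates contribute $\sum_{\text{gate }g}\text{out}(g)$. Counting $|E|$ by targets, the input nodes contribute nothing since they have no incoming edge, each output node contributes exactly $1$, and the gates contribute $\sum_{\text{gate }g}\text{in}(g)$.

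Next I would equate the two expressions for $|E|$. Since every gate of a balanced circuit is balanced, $\text{in}(g)=\text{out}(g)$ for each gate, so the two gate sums are equal and cancel. What survives is
\[
\sum_{\text{input }v}\text{out}(v)=w .
\]
Now the definition of a circuit guarantees $\text{out}(v)\ge 1$ for every input node $v$, so the left-hand side is a sum of $w$ positive integers that equals $w$; hence each term must equal $1$. That is exactly the claim that every input node has a single outgoing edge.

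I do not expect a genuine obstacle here: the whole content is the cancellation of the gate sums, and the only thing demanding care is the bookkeeping—correctly assigning in- and out-degree contributions to each of the three node classes, and remembering to invoke both hypotheses, namely that gates are balanced and that the numbers of input and output nodes agree. Both are essential: without the equal-count condition, or if one gate were allowed to have unequal in- and out-degree, the cancellation would fail and the conclusion could break.
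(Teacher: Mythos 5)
Your proof is correct, and it takes a different (and arguably tighter) route than the paper's. The paper argues via the timeline diagram: since every gate is balanced, the number of timelines is invariant as one sweeps left to right, it equals $w$ at the output end, and a fanning-out input node would force more than $w$ timelines. That is the same underlying invariant as yours --- balanced gates preserve the edge count --- but it is expressed through the informal picture of timelines, which the paper has only introduced by example at that point. Your global double count of $|E|$ by sources and by targets makes the invariant precise without any appeal to the diagram or to a layering of the circuit, and it isolates exactly where each hypothesis enters (gate balance cancels the gate sums; the equality of input and output node counts turns the surviving identity $\sum_{\text{input }v}\mathrm{out}(v)=w$ into a sum of $w$ positive integers equal to $w$). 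As a bonus, your intermediate identity is precisely the content of the Remark the paper states right after this proposition (that when all gates are balanced, the number of input edges equals the number of output edges equals the number of output nodes), so your argument proves that remark along the way, whereas the paper's timeline sweep leaves it as an aside.
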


\begin{proof}
Consider a timeline diagram of a given circuit, and let $w$ be the width of the circuit. The key observation is that the number of timelines remains constant throughout the diagram because all the gates are balanced. At end of the diagram, the timelines are represented by output edges, so the number of timelines is $w$. If at least one input node has two or more outgoing edges then the number of timelines would be $>w$ which is impossible.
\end{proof}

\begin{remark}
A bit more generally, in a circuit where all gates are balanced the number of input edges equals the number of output edges which equals the number of output nodes. Such a circuit is balanced if and only if no input node fans out. \tqed
\end{remark}

\subsection{Composition theory} 
\label{sub:compose}\mbox{}

In the rest of this section, by default, circuits are balanced.

\begin{definition}[Composition]\label{def:compose}
Let $A_1, A_2$ be disjoint circuits of the same width $w$. The \emph{composition} $A_1 A_2$ is the circuit of width $w$ constructed from $A_1, A_2$ thus. At each one of the $w$ strata, merge the output edge of $A_1$ and the input edge of $A_2$ into one edge, removing the output node of $A_1$ and the input node of $A_2$ in the process. \tqed
\end{definition}

If $A_1,A_2$ are not disjoint, replace them with isomorphic copies that are disjoint.
The composite $A_1A_2$ is defined only up to isomorphism. Composition of circuits is associative, so the composite $A_1 A_2 \dots A_N$ is defined for any $N\ge2$. For $N=1$, we adopt the standard convention that the composite of a single circuit is that circuit.

A gate set (i.e.\ a set of gates) $X$ in a circuit is \emph{convex} if every path from an $X$ gate to an $X$ gate involves only $X$ gates.

\begin{definition}[Slice]
Let $X$ be a convex gate set of a circuit $A$.
The \emph{slice of $A$ generated by $X$} is a circuit of the width of $A$ constructed from $A$ as follows.
\begin{enumerate}
\item Remove all non-$X$ gates as well as all edges $e$ such that neither $\Source(e)$ nor $\Target(e)$ belongs to $X$.
\item On every timeline where some $X$ gate is active, do the following. If the input edge has been removed then attach the input node to the unique sourceless edge on the timeline. If the output edge has been removed then attach the output node to the unique targetless edge on the timeline.
\item On every timeline with no active $X$ gates, restore the removed input edge and, if the input edge isn't also the output edge, attach the output node to it. \tqed
\end{enumerate}
\end{definition}

\begin{q}
\Qn Stage (3) looks artificial. Instead of restoring input edges, it seems more natural to create fresh edges.

\Ar It is a bit artificial but for a reason.
Notice that the slice generated by $X$ is completely determined by $X$, that is if $A_1$ is the slice generated by $X$ and so is $A_2$ then $A_1$ and $A_2$ are identical, not almost identical but completely identical.

\Qn Is this important?

\Ar Not really, but it simplifies the exposition, and the price for simplification is small. \tqed
\end{q}

A sequence $X_1, X_2, \dots, X_n$ of gate sets of a circuit $A$ is \emph{coherent} if it respects the relation ``earlier" in the sense that $i<j$ whenever some node of $X_i$ is earlier than some node of $X_j$.
A sequence $X_1, X_2, \dots, X_n$ of gate sets of $A$ is a \emph{coherent partition} of the $A$ gates if it is coherent and if the sets $X_1, X_2, \dots, X_n$ partition the gates of $A$.

\begin{definition}
A \emph{decomposition} of a circuit $A$ is a representation of $A$ as a composite $A_1 A_2 \dots A_n$ where
\begin{enumerate}
\item every $A_i$ is the slice of $A$ generated by a convex subset $X_i$ of the gates of $A$,
\item the sequence $X_1, X_2, \dots, X_n$ is a coherent partition of the gates of $A$. \tqed
\end{enumerate}
\end{definition}

\begin{theorem}\label{thm:decomp}
Let $A$ be a balanced circuit.
\begin{enumerate}
\item Any decomposition $A_1 A_2 \cdots A_n$ of $A$ gives rise to a coherent partition $X_1, X_2, \dots, X_n$ of the $A$ gates where each $X_i$ comprises the gates of $A_i$.
\item Any coherent partition $X_1, X_2, \dots, X_n$ of the $A$ gates gives rise to a decomposition
\[ A = A_1 A_2 \cdots A_n \]
where each factor $A_i$ is the slice generated by $X_i$ (and thus completely determined, not just up to isomorphism).
\end{enumerate}
\end{theorem}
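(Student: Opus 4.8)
The plan is to dispatch part~(1) directly from the definitions and to put all the work into part~(2). For part~(1), note that step~(1) of the construction of a slice deletes only non-$X_i$ gates and adds no new ones, so the gate set of the slice generated by $X_i$ is exactly $X_i$; hence in any decomposition $A_1\cdots A_n$ the gates of $A_i$ are precisely the generating set $X_i$, and the coherence of $X_1,\dots,X_n$ is already built into the notion of decomposition. So (1) requires nothing beyond unwinding the definitions.

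For part~(2) the first thing I would verify is that the candidate factors are even defined, i.e.\ that each block $X_i$ of a coherent partition is convex. This is forced by coherence: were there a path from a gate $g\in X_i$ to a gate $g'\in X_i$ meeting a gate $h\notin X_i$, say $h\in X_j$ with $j\ne i$, then the sub-path from $g$ to $h$ would give $i<j$ and the sub-path from $h$ to $g'$ would give $j<i$, which is absurd. Thus every block is convex and the slices $A_i$ are well defined.

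The substance of part~(2) is the identity $A=A_1A_2\cdots A_n$, which I would establish one timeline at a time and then glue. Fix a timeline $t$; in $A$ it is a chain $I_t\to g_1\to\cdots\to g_m\to O_t$ through the gates active on $t$, taken in order. Writing $g_p\in X_{\sigma(p)}$, coherence applied to each edge $g_p\to g_{p+1}$ gives $\sigma(1)\le\cdots\le\sigma(m)$, while convexity forces the gates of a single block to occur consecutively; hence $t$ breaks into consecutive blocks indexed by the active parts $i_1<\cdots<i_s$, together covering $g_1,\dots,g_m$ without gaps. On this timeline the slice $A_i$ contributes the chain through its block when $X_i$ is active on $t$ and, by step~(3) of the slice construction, a single pass-through edge from its input node to its output node otherwise. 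Composing $A_1,\dots,A_n$ in order and merging, stratum by stratum, the output edge of $A_i$ with the input edge of $A_{i+1}$, the pass-throughs of the inactive factors collapse and the edges abutting each block boundary fuse, returning exactly the chain $I_t\to g_1\to\cdots\to g_m\to O_t$. Because the gate set of the composite is the disjoint union $\bigsqcup_i X_i$, which is the whole gate set of $A$, and because a gate active on several timelines sits in one slice and is therefore placed consistently on all of them, these per-timeline identifications patch together into an isomorphism of the composite with $A$.

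I expect the main obstacle to be the seam bookkeeping together with the global patching. Concretely: one must check that for any two consecutive active parts $i_l<i_{l+1}$ on a timeline every factor with index strictly between them is inactive there, so that the cascade of output-into-input merges across their pass-throughs produces a single edge matching the corresponding edge of $A$; this is where the block ordering (from coherence) and contiguity (from convexity) do their work. The degenerate cases of step~(3)---a timeline bearing no gate of $A$, or one on which the restored input edge already is the output edge---need a separate but routine check. Finally, one must confirm that the whole-circuit composition, which merges per stratum while gates span several strata, really induces the per-timeline chains described above simultaneously, so that the gate-identity map extends to a genuine graph isomorphism. A tempting alternative is induction on $n$, peeling off $X_n$ as $A=B\cdot A_n$ with $B$ the slice generated by $X_1\cup\cdots\cup X_{n-1}$; this reduces everything to the case $n=2$ plus a lemma identifying the slice of $B$ by $X_i$ with the slice of $A$ by $X_i$, but that lemma is of the same difficulty as the direct computation, so I would keep to the timeline argument.
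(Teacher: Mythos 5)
Your proposal is correct, but it takes a genuinely different route from the paper. The paper proves part (2) by induction on $n$: it writes $A = A'A_n$ where $A'$ is the slice generated by $X_1\cup\cdots\cup X_{n-1}$, applies the induction hypothesis to $A'$, and stops there --- a three-line argument. This is precisely the ``tempting alternative'' you name and decline at the end. Your direct timeline-by-timeline argument does more visible work but also proves more of what the short proof leaves implicit: the paper never checks that each block $X_i$ of a coherent partition is convex (so that the slices are even defined), never justifies the base identity $A = A'A_n$ (which is exactly your seam bookkeeping for the two-factor case), and never addresses the lemma you correctly identify as the hidden cost of the induction, namely that the slice of $A'$ generated by $X_i$ coincides with the slice of $A$ generated by $X_i$. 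So the trade is: the paper's induction buys brevity at the price of burying the combinatorial content in an unproved two-factor step plus an unstated slice-of-a-slice lemma, while your per-timeline gluing surfaces that content once and for all and handles all $n$ factors simultaneously. One small simplification available to you: the monotonicity $\sigma(1)\le\cdots\le\sigma(m)$ that you derive from coherence already forces the gates of a single block to occur consecutively on each timeline, so the separate appeal to convexity at that point is not needed (convexity is still needed, as you say, to define the slices in the first place).
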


\begin{proof}
The first claim follows directly from the definition of decomposition.

We prove the second claim by induction on $n$. The case $n=1$ is trivial. Suppose that $n>1$. Then $A$ is the composite $A'A_n$ where $A'$ is the slice of $A$ generated by the convex gate set $X_1\cup X_2 \cup \cdots \cup X_{n-1}$. By the induction hypothesis, $A' = A_1 A_2 \cdots A_{n-1}$, and therefore $A = A_1 A_2 \cdots A_n$. The definition of slice implies that the factors are completely determined.
\end{proof}

A subset $X$ of the gates of a circuit is an \emph{antichain} if every two $X$ gates are incomparable, so that neither is earlier than the other. In particular, antichains are convex. Notice that a slice is generated by a nonempty antichain if and only if its depth is 1.

\begin{corollary}\label{cor:sdecomp1}
Let $A$ be a balanced circuit.
\begin{enumerate}
\item Any decomposition $A_1 A_2 \cdots A_T$ of $A$ into depth-1 factors gives rise to a coherent partition of the $A$ gates into antichains $X_1$, $X_2, \dots, X_T$ where each $X_t$ is the gate set of $A_t$.
\item Any coherent partition of the $A$ gates into antichains $X_1, X_2$, \dots, $X_T$ gives rise to a decomposition of $A$ into depth-1 factors
\[ A = A_1 A_2 \cdots A_T \]
where each factor $A_t$ is the slice generated by $X_t$ (and completely determined). \tqed
\end{enumerate}
\end{corollary}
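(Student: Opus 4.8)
The plan is to read off both parts directly from Theorem~\ref{thm:decomp}, specializing the convex generating sets to antichains by means of the observation recorded just above the corollary: the slice generated by a convex set $X$ has depth $1$ precisely when $X$ is a nonempty antichain. It is worth recalling why this equivalence holds, since it is the only nontrivial input. If $X$ is an antichain, then any walk in the slice joining two gates $g_1,g_2\in X$ could pass through neither an input node (no incoming edges) nor an output node (no outgoing edges), hence would use only original edges between gates and would exhibit $g_1$ as earlier than $g_2$ in $A$, contradicting incomparability; so every path in the slice meets at most one gate, giving depth $\le 1$, and exactly $1$ once $X$ is nonempty. Conversely, if some $g_1\in X$ is earlier than $g_2\in X$, then convexity of $X$ keeps the whole connecting path inside the slice, producing a path with two gates and depth $\ge 2$. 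Thus the entire corollary reduces to translating ``every factor has depth $1$'' into ``every generating set is a nonempty antichain,'' and back.

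For part (1) I would start from a decomposition $A_1 A_2 \cdots A_T$ into depth-$1$ factors. By the definition of decomposition each $A_t$ is the slice of $A$ generated by a convex set $X_t$, and by Theorem~\ref{thm:decomp}(1) the sequence $X_1,\dots,X_T$ is a coherent partition of the $A$ gates with each $X_t$ equal to the gate set of $A_t$. It then remains only to observe that, since $A_t$ is the slice generated by $X_t$ and has depth $1$, the equivalence above forces each $X_t$ to be a nonempty antichain. Hence $X_1,\dots,X_T$ is the desired coherent partition into antichains.

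For part (2) I would run the same correspondence in reverse. Given a coherent partition $X_1,\dots,X_T$ of the $A$ gates into antichains, I first invoke the fact that antichains are convex, so that $X_1,\dots,X_T$ is a coherent partition into convex sets and Theorem~\ref{thm:decomp}(2) applies. That theorem yields a decomposition $A = A_1 A_2 \cdots A_T$ in which each $A_t$ is the slice generated by $X_t$ and is completely determined. Because each block $X_t$ of the partition is a nonempty antichain, the equivalence above shows that each slice $A_t$ has depth $1$, so this is a decomposition into depth-$1$ factors, as claimed.

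The argument is essentially bookkeeping layered on top of Theorem~\ref{thm:decomp}, so I expect no serious obstacle. The one point that demands care is the nonemptiness clause in the depth-$1$ versus antichain equivalence: an empty generating set would produce a gateless, depth-$0$ slice rather than a depth-$1$ factor. This is handled by the standard convention that the blocks of a partition are nonempty, which supplies the nonemptiness needed in part (2) and, conversely, is read off from the depth-$1$ hypothesis in part (1).
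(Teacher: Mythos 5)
Your proposal is correct and matches the paper's intent exactly: the paper leaves this corollary unproved as an immediate specialization of Theorem~\ref{thm:decomp}, relying on the remark just above it that a slice is generated by a nonempty antichain if and only if its depth is~1. Your write-up simply supplies the justification of that equivalence, which the paper asserts without proof.
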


The index $T$ in Corollary~\ref{cor:sdecomp1} alludes to time as will become clear as we discuss semantics in the next two sections.

\begin{definition}
For any circuit $A$, a \emph{coherent gate linearization} is a linear ordering $G_1, G_2, \dots, G_N$ of all the gates which extends the relation ``earlier", so that if $G_i$ is earlier than $G_j$ then $i<j$. \tqed
\end{definition}

\begin{corollary}\label{cor:sdecomp2}
Let $A$ be a balanced circuit with $N$ gates.
\begin{enumerate}
\item Any decomposition $A_1 A_2 \cdots A_N$ of $A$ into one-gate factors gives rise to a coherent gate linearization $G_1, G_2, \dots, G_N$ where each $G_i$ is the only gate of $A_i$.
\item Any coherent gate linearization $G_1, G_2, \dots, G_N$ gives rise to a decomposition into one-gate factors
\[ A = A_1 A_2 \cdots A_N \]
where each factor $A_i$ is the slice generated (and completely determined) by $G_i$. \tqed
\end{enumerate}
\end{corollary}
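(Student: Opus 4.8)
The plan is to recognize Corollary~\ref{cor:sdecomp2} as the singleton special case of Theorem~\ref{thm:decomp}, exactly as Corollary~\ref{cor:sdecomp1} is its antichain special case. The bridge is a pair of identifications. First, a \emph{one-gate factor} is precisely the slice generated by a singleton gate set $\{G\}$: the slice construction deletes every non-$X$ gate, so a slice with a single gate must come from a singleton $X$, and conversely the slice generated by $\{G\}$ retains $G$ as its only gate. Any singleton is trivially convex---indeed an antichain---because acyclicity forbids any (nonempty) path from $G$ to itself, so the convexity requirement on $X_i$ in the definition of decomposition is automatically met.

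Second, I would check that coherent gate linearizations correspond bijectively to coherent partitions into singletons. Given a linearization $G_1, \dots, G_N$, set $X_i = \{G_i\}$. The coherence condition for this partition reads ``$i<j$ whenever some node of $X_i$ is earlier than some node of $X_j$'', which for singletons is literally ``$i<j$ whenever $G_i$ is earlier than $G_j$''---and this is verbatim the defining property of a coherent linearization. Hence coherent singleton partitions and coherent linearizations are the same data.

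With these identifications, both claims are immediate. For part~(1), a decomposition $A_1 \cdots A_N$ into one-gate factors yields, by Theorem~\ref{thm:decomp}(1), a coherent partition $X_1, \dots, X_N$ with $X_i$ the gate set of $A_i$; since each $A_i$ has a single gate, each $X_i = \{G_i\}$ is a singleton, and its coherence is exactly the statement that $G_1, \dots, G_N$ is a coherent linearization. For part~(2), a coherent linearization $G_1, \dots, G_N$ gives the coherent singleton partition $X_i = \{G_i\}$, to which Theorem~\ref{thm:decomp}(2) applies, producing the decomposition $A = A_1 \cdots A_N$ with each $A_i$ the (completely determined) slice generated by $\{G_i\}$, that is, a one-gate factor.

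There is no real obstacle here; the content is entirely carried by Theorem~\ref{thm:decomp}, and the only points demanding care are the two identifications above---confirming that one-gate factors coincide with slices of singletons, and that the coherence of a singleton partition is word-for-word the linearization condition. Both are routine once stated precisely.
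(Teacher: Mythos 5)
Your proposal is correct and matches the paper's intent exactly: the paper states this corollary with no written proof, treating it as an immediate consequence of Theorem~\ref{thm:decomp}, and the two identifications you spell out (one-gate factors are precisely slices of singleton gate sets, which are trivially convex by acyclicity, and coherent singleton partitions are verbatim coherent gate linearizations) are exactly the implicit bridge. Nothing is missing.
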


\section{Boolean circuits} 
\label{sec:bool}

The syntax of Boolean circuits has been defined in the previous section. In this section, we define the semantics.

The central role in Boolean semantics is played by the notion of bit. For computational purposes, a bit is a variable with two possible values $0$ and $1$. Alternatively, a bit can be viewed as a physical system which can be in one of two possible states represented by $0$ and $1$. The first point of view is common in computer science, the second is natural for physics.
For brevity, we call them logical and physical. In this section, the logical point of view will be dominant. The physical point of view is useful as well. In the next section, it will support the generalization of Boolean semantics to quantum semantics.

\subsection{Definition}\mbox{} 

A Boolean function is a function of type $\{0,1\}^m \to \{0,1\}^n$ where $m,n$ are natural numbers. It transforms a state of an $m$-bit system to a state of an $n$-bit system.
For example, the functions
\[ \cnot(a,b) = (a,\ b + a\text{ mod }2)\qquad\text{ and }\qquad
   \swap(a,b) = (b,a) \]
are of type $\{0,1\}^2 \to \{0,1\}^2$. Here are the standard graphical representations

\hspace{4em}
\Qcircuit @C=2em @R=1.5em {
&\ctrl{1} & \qw \\
&\targ & \qw
}
\hspace{1in}
\Qcircuit @C=2em @R=2em {
& \qswap \qwx[1] & \qw \\
& \qswap         & \qw
}

\medskip\noindent
of $\cnot$ and $\swap$ respectively. The function $\cnot$ is known as controlled-not.  The first argument (represented by the black dot) is the control and the second (represented by the circle) is the target.

\begin{definition}\label{def:bcircuit}
  A \emph{Boolean circuit} $B$ is a syntactic circuit $A$ together with an assignment of the following to each gate $G$.
  \begin{enumerate}
  \item A Boolean function $\beta$ of type $\{0,1\}^k \to \{0,1\}^l$ where $k,l$ are the numbers of incoming and outgoing edges of $G$ respectively.
  \item A one-to-one correspondence between the $k$ incoming edges of $G$ and the $k$ argument positions of $\beta$. The edge associated with the $i$-th argument position is the $i$-th \emph{argument edge} of $G$.
  \item A one-to-one correspondence between the $l$ outgoing edges of $G$ and the $l$ value positions of $\beta$. The edge associated with the $j$-th value position is the $j$-th \emph{value edge} of $G$. \tqed
  \end{enumerate}
\end{definition}


\subsection{Valuations} 
\label{val}\mbox{}

Let $B$ be a Boolean circuit with $m$ input nodes.
An \emph{input} to $B$ is an assignment of Boolean values $a_1,\dots,a_m$ to the $m$ input nodes in order. The binary string $a_1\dots a_m$ determines and can be identified with the input.

\begin{definition}
A \emph{valuation} of $B$ on a given input $a_1\dots a_m$  is a Boolean-valued function Val on the edges $e$ of $B$ such that
\begin{itemize}
  \item if $e$ emanates from the $i$-th input node then $\Val(e) = a_i$, and
  \item if $e$ is the $j$-th value edge of a gate $G$ assigned Boolean function $\beta$ and $e_1, e_2, \dots$ are the 1st, 2nd, etc. argument edges of $G$, then
      $\Val(e)\text{ is the $j$-th component of } \beta\big(\Val(e_1),\Val(e_2),\dots\big)$. \tqed
\end{itemize}
\end{definition}

\begin{theorem}
There is a unique valuation of any given Boolean circuit on any given input.
\end{theorem}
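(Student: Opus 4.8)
The plan is to prove existence and uniqueness simultaneously by induction along the acyclic structure of the circuit. The first step is to observe that every edge $e$ falls into exactly one of two cases: either $\Source(e)$ is an input node, in which case $e$ is an input edge, or $\Source(e)$ is a gate $G$, in which case $e$ is an outgoing edge of $G$ and hence, by the one-to-one correspondence in Definition~\ref{def:bcircuit}, the $j$-th value edge of $G$ for a unique $j$. (Output nodes have no outgoing edges, so $\Source(e)$ is never an output node.) Thus the two clauses in the definition of a valuation cover every edge without conflict.

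Next I would introduce a well-founded rank to drive the induction. For an edge $e$, let $r(e)$ be the number of edges in the longest walk whose final edge is $e$; finiteness and acyclicity guarantee this maximum exists. If $e$ is an input edge then $r(e)=1$, since its source has no incoming edges. If $e$ is a value edge of a gate $G$, then any walk ending with $e$ is a walk ending at $G$ extended by $e$, and such a walk ends with an incoming edge of $G$, i.e.\ an argument edge of $G$; since $G$ has at least one incoming edge, we get $r(e) = 1 + \max\{\,r(e') : e' \text{ an argument edge of } G\,\}$, so every argument edge $e'$ of $G$ satisfies $r(e') < r(e)$. This strict inequality is the monotonicity that makes both induction arguments go through.

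For existence, I would define $\Val(e)$ by induction on $r(e)$. When $e$ is an input edge from the $i$-th input node, set $\Val(e)=a_i$. When $e$ is the $j$-th value edge of a gate $G$ with assigned function $\beta$, the argument edges $e_1,e_2,\dots$ all have strictly smaller rank, so $\Val(e_1),\Val(e_2),\dots$ are already defined; set $\Val(e)$ to be the $j$-th component of $\beta(\Val(e_1),\Val(e_2),\dots)$. By construction the resulting function satisfies both clauses of the definition, so a valuation exists.

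For uniqueness, suppose $\Val$ and $\Val'$ are both valuations on the same input, and show $\Val(e)=\Val'(e)$ for all $e$ by induction on $r(e)$. Input edges are forced to equal $a_i$ by the first clause. If $e$ is the $j$-th value edge of $G$, then its argument edges have smaller rank, so $\Val$ and $\Val'$ agree on them by the induction hypothesis; applying the second clause, both $\Val(e)$ and $\Val'(e)$ equal the $j$-th component of $\beta$ evaluated at the common argument values, hence coincide. The only thing that requires care is the rank function and the strict inequality $r(e')<r(e)$ for argument edges; once acyclicity is used to secure that argument edges strictly precede the value edges they feed, everything else is a routine structural induction and no genuine obstacle remains.
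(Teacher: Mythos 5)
Your proof is correct and takes essentially the same route as the paper's, which simply says to construct the valuation by induction on the ordering ``$\Source(e)$ is earlier than $\Source(e')$''; your rank function $r(e)$ is just an explicit implementation of that well-founded induction. The extra detail (the case split on $\Source(e)$, the strict decrease of rank on argument edges, and the separate existence and uniqueness passes) is a faithful elaboration of what the paper leaves implicit.
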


\begin{proof}
Construct the desired valuation, in a unique way, by induction on the ordering ``$\Source(e)$ is earlier than $\Source(e')$".
\end{proof}

The \emph{function} of $B$ is the Boolean function
\[ \Fun_B(a_1,\dots,a_m) = (b_1,\dots,b_n)\]
where $b_1,\dots,b_n$ are the Boolean values assigned to the output edges number $1,\dots,n$ respectively by the valuation of $B$ on input $a_1\dots a_m$.

View a Boolean circuit $B$ as a computing device that, given an input binary string $\bar a$, computes the unique valuation on the input and then outputs $\Fun_B(\bar a)$. That computation may be executed by one computing agent or several computing agents which collaborate in performing the circuit computation. In either case, if gate $G_1$ is earlier than gate $G_2$ then $G_1$ must be fired, or executed, before $G_2$ is fired. Incomparable gates may be fired in either order or simultaneously.

We restrict attention to the case where circuit computations are performed by one computing agent executing one step after another.

There may be different such sequential-time computations of a given circuit on a given input.
One of them is the \emph{eager} computation:

{\tt
\hspace{2em} until all gates are fired do\\
\phantom{mmmmmmmmmm}  fire all of the earliest unfired gates.
}

In a general computation, only some of the earliest gates are fired at step~1, only some of the earliest among the remaining gates are fired at step~2, and so on.

\begin{theorem}
  All computations of any given Boolean circuit on any given input produce the same valuation and the same output.
\end{theorem}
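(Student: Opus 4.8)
The plan is to leverage the preceding uniqueness theorem: since there is exactly one valuation $\Val$ of $B$ on the given input, it suffices to show that every computation assigns to each edge precisely its $\Val$-value. I would first pin down the bookkeeping of a computation. At the start, each input edge $e$ emanating from the $i$-th input node holds the value $a_i$, which is $\Val(e)$ by the first clause of the definition of valuation. As the computation proceeds, firing a gate $G$ means computing the values on the value edges of $G$ from the values currently held on its argument edges, via the Boolean function $\beta$ assigned to $G$. The crucial structural point, extracted directly from the firing discipline, is that a gate is fired only when it is among the earliest unfired gates, hence only after every gate earlier than it has already been fired. In particular, every gate whose value edge is an argument edge of $G$ is already fired when $G$ fires, so at that moment all argument edges of $G$ carry values and the firing is well-defined.

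The heart of the argument is an induction on the number of gates fired so far. The inductive invariant is: after any initial segment of the computation, the valued edges are exactly the input edges together with the value edges of the gates fired so far, and every such edge $e$ carries the value $\Val(e)$. The base case is the initial configuration, handled by the input clause above. For the inductive step, consider the firing of a gate $G$ with Boolean function $\beta$ and argument edges $e_1, e_2, \dots$. Each argument edge is either an input edge or the value edge of a gate earlier than $G$; in the first case it is valued from the start, and in the second the structural point guarantees that gate has already been fired, so in both cases $e_i$ is valued, and by the induction hypothesis it carries $\Val(e_i)$. The firing assigns to the $j$-th value edge $e$ of $G$ the $j$-th component of $\beta(\Val(e_1),\Val(e_2),\dots)$, which is exactly $\Val(e)$ by the second clause of the definition of valuation. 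This extends the invariant to the new configuration.

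Since $B$ is a finite acyclic circuit, the relation ``earlier'' has minimal elements as long as unfired gates remain, so every computation terminates with all gates fired, at which point every edge is valued. By the invariant each edge $e$ then carries $\Val(e)$, so every computation produces the single valuation $\Val$, and in particular the same values $b_1,\dots,b_n$ on the output edges, i.e.\ the same output $\Fun_B(\bar a)$. I do not anticipate a deep obstacle: the whole difficulty has already been absorbed into the prior uniqueness theorem, and the only point demanding care is verifying that the firing discipline guarantees all argument edges are valued at the instant a gate fires. This is precisely what ``earliest unfired'' buys us, and it is what makes each firing, and hence the induction, well-posed.
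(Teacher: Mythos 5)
Your proposal is correct and takes essentially the same route as the paper: reduce everything to the preceding uniqueness theorem by observing that each computation produces a valuation, which must therefore be the unique one. The paper states this in three sentences and treats as obvious the step you verify in detail by induction (that the firing discipline makes each computation's edge-labelling a genuine valuation), so your write-up is simply a more explicit version of the same argument.
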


\begin{proof}
  Each computation of a given circuit $B$ on a given input $\bar a$ produces a valuation. All these valuations coincide because there is only one valuation of $B$ on $\bar a$. It follows that all the outputs coincide.
\end{proof}

\subsection{Balanced Boolean circuits}\mbox{} 

We turn attention to Boolean circuits whose underlying syntactic circuits are balanced. We simplify the description of such Boolean circuits by imposing a normalizing constraint on the correspondences of incoming and outgoing edges of a gate to the argument and value positions respectively of the Boolean function assigned to the gate.

\begin{definition}\label{def:bbcircuit}
  A \emph{balanced Boolean circuit} $B$ is a balanced syntactic circuit $A$ together with an assignment of the following to each gate $G$.
  \begin{enumerate}
  \item A Boolean function $\beta$ of type $\{0,1\}^r \to \{0,1\}^r$ where $r$ is the arity of $G$.
  \item If $G$ is active on timelines $L_1< L_2< \dots< L_r$ then the incoming and outgoing edges of timeline $L_i$ are assigned to the $i$-th argument position and the $i$-th value position of $\beta$, so that they are the \emph{$i$-th argument edge} and \emph{$i$-th value edge} of $G$ respectively. \tqed
  \end{enumerate}
\end{definition}

The simplification reflects common practice. It is not without a price. One should pay attention to how the argument (resp.\ value) positions of the assigned Boolean function are ordered. For example, one should distinguish between these two versions of the $\cnot$ function:

\hspace{4em}
\Qcircuit @C=2em @R=1.5em {
&\ctrl{1} & \qw \\
&\targ & \qw
}
\hspace{1in}
\Qcircuit @C=2em @R=1.5em {
&\targ & \qw \\
&\ctrl{-1} & \qw
}

\begin{remark}
This simplification exploits the linear ordering of timelines on the timeline diagrams that we use to represent balanced syntactic circuits. The incoming (resp.\ outgoing) edges of a gate inherit that order.
The general Boolean circuits can be similarly simplified if appropriate linear orderings of the incoming (resp.\ outgoing) edges of each gate are provided.
\end{remark}

In the rest of this section, Boolean circuits are by default balanced.

\begin{definition}\label{def:bcompose}\mbox{}
The \emph{composition}, also called the \emph{product}, $B_1B_2\dots B_n$ of Boolean circuits $B_1,B_2,\dots,B_n$ with underlying syntactic circuits $A_1,A_2,\dots,A_n$ respectively is a Boolean circuit $B$ whose underlying syntactic circuit is the composition $A_1 A_2 \dots A_n$. On the gates inherited from $A_i$ the gate assignment of $B$ coincides with that of $B_i$. \tqed
\end{definition}

As in the case of syntactic circuits, the composite circuit is defined only up to isomorphism.

The composition of Boolean functions of the same arity is defined as usual:
\begin{align*}
  (\beta_2\beta_1)(\bar a) &= \beta_2(\beta_1(\bar a))\\
  \beta_{n+1}\beta_n\dots\beta_1 &= \beta_{n+1}(\beta_n\dots\beta_1)
\end{align*}

\begin{proposition}
If a Boolean circuit $B$ is the composite of Boolean circuits $B_1, B_2, \dots, B_n$ computing Boolean functions $\beta_1, \dots, \beta_n$ respectively then
\[ \Fun_B = \beta_n \beta_{n-1} \dots \beta_2\beta_1 \]
\end{proposition}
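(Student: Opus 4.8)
The plan is to prove the formula
\[ \Fun_B = \beta_n \beta_{n-1} \dots \beta_2\beta_1 \]
by reducing the general composite to the two-factor case and then iterating. First I would observe that it suffices to treat $n=2$, since Definition~\ref{def:bcompose} makes composition associative and the composition of Boolean functions is associative; once $\Fun_{B_1 B_2} = \beta_2 \beta_1$ is established, writing $B = (B_1 \cdots B_{n-1}) B_n$ and applying induction on $n$ gives the general statement, with the base case $n=1$ being the convention that a single-factor composite is the circuit itself. So the heart of the matter is the claim that for two balanced Boolean circuits of common width $w$, the circuit $B_1 B_2$ computes $\beta_2 \circ \beta_1$.

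For the two-factor case, I would work directly with valuations, invoking the existence-and-uniqueness theorem for valuations so that I may reason about \emph{the} valuation $\Val$ of $B_1 B_2$ on a given input $\bar a = a_1 \dots a_w$. The key geometric fact is supplied by Definition~\ref{def:compose}: on each of the $w$ timelines, the output edge of $A_1$ and the input edge of $A_2$ have been merged into a single edge of $A_1 A_2$. I would name the merged edge on timeline $k$ as $c_k$ and set $\bar b = (\Val(c_1), \dots, \Val(c_w))$. The plan is then to show two things: that the restriction of $\Val$ to the edges inherited from $A_1$ is exactly the valuation of $B_1$ on input $\bar a$, so that $\bar b = \beta_1(\bar a) = \Fun_{B_1}(\bar a)$; and that the restriction of $\Val$ to the edges inherited from $A_2$ is exactly the valuation of $B_2$ on input $\bar b$, so that the output of $B_1 B_2$ equals $\Fun_{B_2}(\bar b) = \beta_2(\bar b)$. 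Chaining these gives $\Fun_{B_1 B_2}(\bar a) = \beta_2(\beta_1(\bar a))$, which is $(\beta_2 \beta_1)(\bar a)$.

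Each of the two restriction claims follows from uniqueness of valuations, since a function on edges that satisfies the two defining conditions of a valuation must \emph{be} the valuation. For the $A_1$-part, the restriction clearly satisfies the input condition (the input edges of $A_1 A_2$ are the input edges of $A_1$ and carry $\bar a$) and the gate condition (gates inherited from $A_1$ keep their Boolean functions and edge correspondences by Definition~\ref{def:bcompose}); the merged edge $c_k$ plays the role of the former output edge of $A_1$ on timeline $k$, so its value is the $k$-th output of $\beta_1$ applied along $A_1$, giving $\bar b = \beta_1(\bar a)$. For the $A_2$-part, the same merged edge $c_k$ now plays the role of the former \emph{input} edge of $A_2$ on timeline $k$, carrying the value $\Val(c_k) = b_k$; thus the restriction satisfies the input condition relative to the input $\bar b$, and it satisfies the gate condition because the $A_2$-gates retain their assignments.

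The main obstacle I anticipate is bookkeeping rather than conceptual: I must make sure the timeline indexing is used consistently, so that the $k$-th component of $\bar b$ genuinely feeds the $k$-th input node of $B_2$. This is precisely where the normalizing constraint of Definition~\ref{def:bbcircuit}, that the $i$-th argument and value edges of a gate are the incoming and outgoing edges of its $i$-th active timeline, earns its keep: because argument and value positions are tied to timelines on both sides of the merge, the merged edge $c_k$ carries the value that is simultaneously the $k$-th output of the $A_1$-computation and the $k$-th input of the $A_2$-computation, with no re-indexing permutation to track. Care is also needed to confirm that the merge removes exactly the output nodes of $A_1$ and the input nodes of $A_2$, so that the input nodes of $B_1 B_2$ are those of $B_1$ and the output nodes are those of $B_2$; this is immediate from Definition~\ref{def:compose} but should be stated explicitly so the identification of $\bar a$ and the final output with the right endpoints is unambiguous.
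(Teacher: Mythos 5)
Your proposal is correct and follows the same route as the paper: reduce to the two-factor case by associativity and then read the identity $\Fun_{B_1B_2}=\beta_2\beta_1$ off Definitions~\ref{def:compose} and \ref{def:bcompose}. The paper simply declares the $n=2$ case a ``straightforward consequence'' of those definitions, whereas you supply the details (restricting the unique valuation of $B_1B_2$ to the $A_1$- and $A_2$-parts and matching values across the merged edges), which is exactly the argument the paper is gesturing at.
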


\begin{proof}
It suffices to prove the theorem for $n=2$. But this case of the theorem is a straightforward consequence of Definitions~\ref{def:compose} and \ref{def:bcompose}.
\end{proof}

\begin{definition}
A \emph{decomposition} of a Boolean circuit $B$ is a representation of $B$ as a composite $B_1 B_2 \dots B_n$ such that the composite $A_1 A_2 \dots A_n$ of the underlying syntactic circuits is a decomposition of the underlying syntactic circuit of $B$. \tqed
\end{definition}

Notice that the factors $B_i$ are completely determined by the underlying gate sets.

\begin{definition}
Let $B$ be a balanced Boolean circuit.
\begin{itemize}
\item A Boolean circuit $B'$ is a \emph{slice} of $B$ if the underlying syntactic circuit of $B'$ is a slice of $A$ and the gate assignment of $B$ coincides with that of $B'$ on the $B'$ gates.
\item A \emph{coherent gate linearization} of $B$ is that of  the underlying syntactic circuit of $B$, that is a linear extension of relation ``earlier" on the gates. \tqed
\end{itemize}
\end{definition}

\begin{theorem}\label{thm:bdecomp}
Let $B$ be a balanced Boolean circuit.
\begin{enumerate}
\item Any decomposition $B_1 B_2 \dots B_n$ of $B$ gives rise to a coherent partition $X_1, X_2, \dots, X_n$ of the $B$ gates where each $X_i$ comprises the gates of $B_i$.
\item Any coherent partition $X_1, X_2, \dots, X_n$ of the $B$ gates gives rise to a decomposition
\[ B = B_1 B_2 \dots B_n \]
where each factor $B_i$ is the slice of $B$ generated by $X_i$ (and completely determined). \tqed
\end{enumerate}
\end{theorem}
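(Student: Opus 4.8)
The plan is to reduce the whole statement to its already-proved syntactic counterpart, Theorem~\ref{thm:decomp}, and then verify that the Boolean gate assignments are transported correctly. Write $A$ for the syntactic circuit underlying $B$. Since the gates of $B$ are by definition exactly the gates of $A$, a partition of the $B$ gates is literally a partition of the $A$ gates, and the relation ``earlier'' agrees on the two. Hence coherence of a partition is a purely syntactic condition, untouched by the Boolean data, and I can feed partitions back and forth between $B$ and $A$ without change.

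For part~(1), I would simply unwind the definitions: a decomposition $B_1 B_2 \dots B_n$ of $B$ is by definition one whose underlying syntactic circuits $A_1, \dots, A_n$ form a decomposition $A_1 \dots A_n$ of $A$. Theorem~\ref{thm:decomp}(1) then supplies a coherent partition $X_1, \dots, X_n$ of the $A$ gates with $X_i$ the gate set of $A_i$; since the gates of $B_i$ are exactly those of $A_i$, each $X_i$ comprises the gates of $B_i$, as claimed.

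For part~(2), I would start from a coherent partition $X_1, \dots, X_n$ of the $B$ gates and apply Theorem~\ref{thm:decomp}(2) to $A$, obtaining $A = A_1 \dots A_n$ with $A_i$ the slice of $A$ generated by $X_i$. I then define $B_i$ to be the Boolean circuit on $A_i$ whose gate assignment is the restriction of that of $B$ to $X_i$; this is precisely the slice of $B$ generated by $X_i$, and it is completely determined because $A_i$ is. To see that $B_1 \dots B_n = B$, note that the underlying syntactic composite is $A_1 \dots A_n = A$, so $B$ and the composite share their syntactic circuit and their gate set; and each gate lies in a unique $X_i$, where Definition~\ref{def:bcompose} forces the composite to carry the same assignment as $B_i$, hence the same as $B$. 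Thus $B = B_1 \dots B_n$ is a decomposition of the required form.

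The only genuine point of care — and the step I expect to be the main obstacle — is that in Definition~\ref{def:bbcircuit} the argument and value edges of a gate are named relative to the timelines on which the gate is active, in their order $L_1 < \dots < L_r$. For the phrase ``the gate assignment of $B$ coincides with that of $B'$'' to be meaningful, I must check that slicing preserves, for each retained gate, exactly which timelines it is active on and in what order. This holds because the slice construction works timeline by timeline and never changes which timelines a gate encounters; it only trims or reattaches edges before and after the gate along each timeline. Consequently the $i$-th argument edge and $i$-th value edge of every gate are the same in $B_i$ as in $B$, the inherited assignment is unambiguous, and the argument closes.
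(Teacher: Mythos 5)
Your proposal is correct and follows exactly the route the paper takes: the paper's entire proof is the one-line remark that the theorem is a straightforward consequence of Theorem~\ref{thm:decomp}, which is precisely the reduction you carry out (your version just spells out the bookkeeping of the gate assignments, including the timeline-preservation point, which the paper leaves implicit).
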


\begin{proof}
This is a straightforward consequence of Theorem~\ref{thm:decomp}
\end{proof}

\begin{corollary}\label{cor:bdecomp1}
Let $B$ be a balanced Boolean circuit.
\begin{enumerate}
\item Any decomposition $B_1 B_2 \dots B_T$ of $B$ into depth-1 factors gives rise to a coherent partition of the $B$ gates  into antichains $X_1$, $X_2, \dots, X_T$ where each $X_t$ is the gate set of $B_t$.
\item Any coherent partition of the $B$ gates into antichains $X_1$, $X_2$, \dots, $X_T$ gives rise to a decomposition into depth-1 factors
\[ B = B_1 B_2 \dots B_T \]
where each factor $B_t$ is the slice generated by $X_t$ (and completely determined). \tqed
\end{enumerate}
\end{corollary}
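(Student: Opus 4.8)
The plan is to derive Corollary~\ref{cor:bdecomp1} from Theorem~\ref{thm:bdecomp} in exactly the way Corollary~\ref{cor:sdecomp1} is derived from Theorem~\ref{thm:decomp}; the only extra ingredient is the dictionary between ``depth-1 factor'' and ``antichain part.'' The bridge is the observation recorded just before Corollary~\ref{cor:sdecomp1}: a slice is generated by a nonempty antichain if and only if its depth is $1$. Thus the two specializations in Corollary~\ref{cor:bdecomp1} amount to restricting the correspondence of Theorem~\ref{thm:bdecomp} to those partitions all of whose parts are nonempty antichains, equivalently to those decompositions all of whose factors have depth $1$.

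For the first claim, I would start from a decomposition $B_1 B_2 \cdots B_T$ of $B$ into depth-1 factors and apply Theorem~\ref{thm:bdecomp}(1) to obtain a coherent partition $X_1, \dots, X_T$ of the gates with $X_t$ the gate set of $B_t$. It then remains to see that each $X_t$ is an antichain. This follows from the cited observation applied to the slice $B_t$, which is generated by $X_t$ and has depth $1$. Concretely, a depth-1 slice has at least one gate, so $X_t \ne \emptyset$; and had $X_t$ contained two comparable gates, convexity of $X_t$ would keep the connecting gate path inside the slice, forcing a length-2 gate path and hence depth $\ge 2$, a contradiction. So each $X_t$ is a nonempty antichain.

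For the second claim, I would start from a coherent partition $X_1, \dots, X_T$ of the $B$ gates into antichains. Since antichains are convex, this is in particular a coherent partition in the sense of Theorem~\ref{thm:bdecomp}, so that theorem yields the decomposition $B = B_1 B_2 \cdots B_T$ with each $B_t$ the slice generated by $X_t$ and completely determined. Applying the observation in the opposite direction, each nonempty antichain $X_t$ generates a slice of depth $1$, so every factor $B_t$ has depth $1$ and the decomposition is into depth-1 factors, as required.

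The only genuine content lies in the depth-1 $\Leftrightarrow$ nonempty-antichain equivalence, and since that is already available in the text, I expect no real obstacle; the argument is essentially a routine unwinding of Theorem~\ref{thm:bdecomp} through this equivalence. The single point warranting a moment's care is the nonemptiness of the parts---guaranteeing it on the antichain side, so that each slice has depth exactly $1$ rather than $0$, and recovering it on the factor side, where a depth-1 factor must contain a gate---but this is immediate from the convention that the parts of a partition are nonempty together with the definition of depth.
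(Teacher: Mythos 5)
Your proposal is correct and follows exactly the route the paper intends: the corollary is stated without proof precisely because it is the specialization of Theorem~\ref{thm:bdecomp} obtained via the remark that a slice is generated by a nonempty antichain if and only if it has depth~1, which is the equivalence you invoke in both directions. Your filling-in of the details (convexity forcing depth $\ge 2$ from a comparable pair, and nonemptiness of the parts) is accurate and adds nothing beyond what the paper treats as immediate.
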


\begin{corollary}\label{cor:bdecomp2}
Let $B$ be a balanced Boolean circuit with $N$ gates.
\begin{enumerate}
\item Any decomposition $B_1 B_2 \dots B_N$ of $B$ into one-gate factors gives rise to a coherent gate linearization $G_1, G_2, \dots, G_N$ where each $G_i$ is the only gate of $B_i$.
\item Any coherent gate linearization $G_1, G_2, \dots, G_N$ of the $B$ gates gives rise to a decomposition into one-gate factors
\[ B = B_1 B_2 \dots B_N \]
where each factor $B_i$ is the slice generated by $G_i$ (and completely determined). \tqed
\end{enumerate}
\end{corollary}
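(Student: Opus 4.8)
The plan is to obtain this corollary as the one-gate specialization of Theorem~\ref{thm:bdecomp}, in exactly the way that Corollary~\ref{cor:sdecomp2} is obtained from Theorem~\ref{thm:decomp} in the purely syntactic setting. The single bridging observation is that a Boolean circuit factor has exactly one gate if and only if it is the slice generated by a singleton gate set, together with the fact that singleton gate sets are an extreme special case of convex sets. So rather than proving anything new, I would reduce both parts to the general decomposition theorem for balanced Boolean circuits and then do the bookkeeping to translate ``singleton'' data into ``linearization'' data.

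First I would record that every singleton $\{G\}$ is vacuously an antichain: an antichain merely requires that no two of its gates be comparable under ``earlier,'' and a singleton has no two distinct gates to compare. Since antichains are convex, $\{G\}$ is convex, so the slice it generates is defined, and by the definition of slice that slice has exactly the one gate $G$. Consequently a decomposition of $B$ into factors each carrying a single gate is precisely a decomposition $B = B_1 \cdots B_N$ in the sense of Theorem~\ref{thm:bdecomp} all of whose generating sets $X_i$ are singletons, say $X_i = \{G_i\}$. Next I would translate the coherence condition: a sequence $X_1, \ldots, X_N$ is a coherent partition of the $B$ gates exactly when the $X_i$ partition the gates and $i<j$ whenever some node of $X_i$ is earlier than some node of $X_j$; for singletons $X_i = \{G_i\}$ this says that $G_1, \ldots, G_N$ enumerates every gate exactly once and that $G_i$ earlier than $G_j$ forces $i<j$, which is verbatim the definition of a coherent gate linearization. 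Thus coherent partitions into singletons and coherent gate linearizations are literally the same data.

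With these two identifications in hand, both claims fall directly out of Theorem~\ref{thm:bdecomp}. For part~(1), a decomposition into one-gate factors has singleton generators $\{G_i\}$, so Theorem~\ref{thm:bdecomp}(1) yields that $X_1, \ldots, X_N$ is a coherent partition, which by the translation above is a coherent gate linearization $G_1, \ldots, G_N$. For part~(2), a coherent gate linearization supplies a coherent partition into singletons $\{G_i\}$, and Theorem~\ref{thm:bdecomp}(2) then produces $B = B_1 \cdots B_N$ with each $B_i$ the completely determined slice generated by $\{G_i\}$, that is, by $G_i$. I do not anticipate any genuine obstacle: all the mathematical content already resides in Theorem~\ref{thm:bdecomp} (and ultimately in Theorem~\ref{thm:decomp}), and the only point requiring care is the routine verification that singletons are antichains and that a one-gate factor is exactly the slice of a singleton, so that the general theorem applies without change.
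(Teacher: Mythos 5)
Your proposal is correct and matches the paper's intent: the paper states this corollary without proof, treating it exactly as the one-gate specialization of Theorem~\ref{thm:bdecomp} (parallel to how Corollary~\ref{cor:sdecomp2} follows from Theorem~\ref{thm:decomp}). Your bridging observations --- singletons are antichains, hence convex, and coherent singleton partitions are literally coherent gate linearizations --- are precisely the routine bookkeeping the authors leave implicit.
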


Define the depth of a Boolean circuit $B$ to be the depth of the underlying syntactic circuit $A$, that is the maximum number of gates involved in any path.

\begin{theorem}
Let $d$ be the depth of a balanced Boolean circuit $B$, and consider computations of $B$ on a given input.
\begin{enumerate}
\item Any computation has at least $d$ steps.
\item The eager computation has exactly $d$ steps.
\item For some choices of $B$, there are non-eager $d$-step computations.
 \end{enumerate}
\end{theorem}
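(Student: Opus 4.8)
The plan is to handle the three parts with one common tool. To each gate $G$ assign its \emph{height} $h(G)$, defined as the maximum number of gates on any path ending at $G$; equivalently $h(G)=1$ if no gate is earlier than $G$, and otherwise $h(G)$ is one more than the largest height among gates earlier than $G$. I would first record that $\max_G h(G)=d$. The gates on any single path are linearly ordered by ``earlier'' and so form a chain, and the depth-realizing path exhibits a chain of $d$ gates, giving $\max_G h(G)\ge d$; conversely, from a chain $G_1,G_2,\dots,G_k$ I would concatenate the walks witnessing that each gate is earlier than the next into a single walk, which is a path because the circuit is acyclic, so all $k$ gates lie on one path and $d\ge k$. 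With this in hand, part~(1) is immediate: fixing the gates $G_{i_1},\dots,G_{i_d}$ along a depth-realizing path, they form a chain, and since a gate may be fired only after every earlier gate has been fired (and two comparable gates may not be fired at the same step), these $d$ gates occupy $d$ strictly increasing steps, forcing at least $d$ steps.

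For part~(2), part~(1) already gives a lower bound of $d$, so it suffices to bound the eager computation above by $d$. I would prove by induction on $k$ that the eager computation fires every gate of height $k$ exactly at step $k$. The height-$1$ gates are precisely the minimal ones, which the eager rule fires at step~$1$. For the inductive step, a gate $G$ of height $k$ has all earlier gates of height at most $k-1$, hence all fired by step $k-1$, and has at least one earlier gate of height exactly $k-1$, fired at step $k-1$; the latter keeps $G$ from being minimal, and so unfired, before step $k$, while the former makes $G$ minimal at step $k$, so $G$ is fired precisely at step $k$. Since $\max_G h(G)=d$, every gate is fired by step $d$, so the eager computation has exactly $d$ steps.

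For part~(3) I would give an explicit small example. Take width $2$, with a single unary gate $A$ on timeline~$1$ and two unary gates $C_1,C_2$ in series on timeline~$2$, each assigned the $\NOT$ function. Then $d=2$, the gates $A$ and $C_1$ have height $1$, and $C_2$ has height $2$. The eager computation fires $\{A,C_1\}$ and then $\{C_2\}$, whereas the computation that fires only $\{C_1\}$ at step~$1$ (declining the equally early gate $A$) and then $\{A,C_2\}$ at step~$2$ respects ``earlier'', uses $d=2$ steps, and is not eager.

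The step I expect to require the most care is the induction in part~(2): one must check both that a height-$k$ gate is \emph{not} fired before step $k$ and that it \emph{is} fired by step $k$, and this hinges on the sharp observation that a height-$k$ gate always has an earlier gate of height exactly $k-1$. The other delicate point is the identity $\max_G h(G)=d$, whose ``$\ge$'' direction relies on concatenating several walks into one and then invoking acyclicity to conclude that the result is a genuine path.
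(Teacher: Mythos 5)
Your proposal is correct, and for parts (1) and (2) it is essentially the paper's argument with the details supplied: the paper's part (1) is exactly your chain-along-a-longest-path observation, and its part (2) is simply declared ``obvious,'' where you flesh it out via the height function $h$ and the induction showing the eager computation fires every height-$k$ gate at step $k$ (your identity $\max_G h(G)=d$ and the existence of an earlier gate of height exactly $k-1$ are the right things to check). Part (3) is where you genuinely diverge: the paper argues abstractly, building the maximally procrastinating computation from the coherent partition $\dots,X_2,X_1$ in which $X_1$ is the set of latest gates, $X_2$ the latest among the rest, and so on, and noting that this $d$-step computation can differ from the eager one; you instead exhibit a concrete width-$2$ circuit (one unary gate on timeline~1, two in series on timeline~2) with an explicit non-eager $2$-step schedule. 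Your example is arguably the cleaner way to discharge the existential claim, since the paper's ``may be different from the eager computation'' still implicitly requires such an example to be fully convincing; the paper's construction, on the other hand, tells you exactly which circuits admit non-eager $d$-step computations (those where the lazy and eager partitions differ). Both routes are sound.
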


\begin{proof}\mbox{}
\begin{enumerate}
\item Since a gate cannot be fired until all the earlier gates have been fired, the last gate involved in a longest path can not be fired before step $d$.
\item Obvious.
\item Consider a coherent partition $\dots, X_2, X_1$ the $B$ gates where $X_1$ comprises the latest gates, and $X_2$ comprises the latest among the remaining gates, and so on. It gives rise to a $d$-step computation which may be different from the eager computation. (This computation procrastinates as much as possible subject to finishing in $d$ steps.) \qedhere
\end{enumerate}
\end{proof}

\subsection{Reversible Boolean circuits}\mbox{} 
\label{sec:reverse}

A function $\beta$ of type $\{0,1\}^m \to \{0,1\}^n$ has a (two-sided) inverse if and only if $m=n$ and $\beta$ is a permutation of $\{0,1\}^n$. The inverse $\beta^{-1}$ is unique; it is the inverse of $\beta$ in the permutation group of $\{0,1\}^n$.
For example the controlled-not function  $\cnot(a,b) = (a,b + a\mod 2)$ is its own inverse.

\begin{definition}
A Boolean function is \emph{reversible} if it has a (two-sided) inverse.
A Boolean circuit $B$ is \emph{reversible} if its function $\Fun_B$ is reversible. \tqed
\end{definition}

\begin{lemma}\label{lem:rev}
The composite $B = B_1 B_2 \dots$ of Boolean circuits is reversible if and only if each factor $B_i$ is reversible.
\end{lemma}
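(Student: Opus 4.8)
The plan is to reduce this statement about circuits to a purely function-theoretic statement about Boolean functions and then exploit finiteness. By the Proposition preceding this lemma, if $B = B_1 B_2 \cdots B_n$ and we write $\beta_i = \Fun_{B_i}$, then $\Fun_B = \beta_n \beta_{n-1} \cdots \beta_1$. The crucial structural input is that, since the $B_i$ are composed, they all share the common width $w$ of $B$, so every $\beta_i$ is a self-map of the finite set $\{0,1\}^w$. As observed just before the definition of reversibility, a self-map of $\{0,1\}^w$ is reversible exactly when it is a permutation, i.e.\ a bijection; and on a finite set bijectivity is equivalent to injectivity and to surjectivity. Thus the lemma becomes: a composite $\beta_n \cdots \beta_1$ of self-maps of $\{0,1\}^w$ is a bijection if and only if each factor is.

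The ``if'' direction is immediate: permutations of $\{0,1\}^w$ are closed under composition in the symmetric group, so if every $\beta_i$ is a bijection then so is $\Fun_B$. For the ``only if'' direction I would invoke the elementary fact that if a composite $g \circ f$ of self-maps of a finite set $S$ is a bijection, then both $f$ and $g$ are bijections: surjectivity of $g \circ f$ forces $g$ to be surjective, hence bijective by finiteness, while injectivity of $g \circ f$ forces $f$ to be injective, hence again bijective. Applying this with $g = \beta_n$ and $f = \beta_{n-1} \cdots \beta_1$ shows both that $\beta_n$ is a bijection and that the shorter composite $\beta_{n-1} \cdots \beta_1$ is a bijection; an induction on $n$ then peels off the remaining factors and shows each $\beta_i$ is reversible.

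The point deserving care — and the only genuine obstacle — is the ``only if'' direction, which is flatly false for maps between sets of differing sizes: a composite of two non-bijective maps can perfectly well be a bijection when domains and codomains differ in cardinality. What rescues the argument is balancedness: because all factors have the same width, each $\beta_i$ has equal finite domain and codomain $\{0,1\}^w$, so injectivity (or surjectivity) of a single factor already upgrades to bijectivity. I would flag this use of the common width explicitly, since it is exactly the hypothesis that makes the finite-set lemma applicable and the equivalence hold.
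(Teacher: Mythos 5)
Your proof is correct, but it takes a different route from the paper's for the ``only if'' direction. The paper argues by contrapositive: it picks the \emph{least} index $j$ with $\beta_j$ irreversible, extracts a collision $\beta_j(x)=\beta_j(y)$ with $x\ne y$, uses the minimality of $j$ to invert the prefix $\alpha=\beta_{j-1}\cdots\beta_1$ and pull the collision back to distinct inputs $x'=\alpha^{-1}(x)$, $y'=\alpha^{-1}(y)$, and then pushes forward through the suffix to exhibit $\Fun_B(x')=\Fun_B(y')$, so the composite is irreversible. You instead peel off one factor at a time via the standard lemma that if $g\circ f$ is a bijection of a finite set to itself then $g$ is surjective (hence bijective) and $f$ is injective (hence bijective), and induct on $n$. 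Both arguments rest on the same two finiteness facts --- that each $\beta_i$ is a self-map of the single set $\{0,1\}^w$, and that for such maps injectivity, surjectivity, and bijectivity coincide --- and you are right to flag the common width as the hypothesis that makes this work; the paper uses the same facts but leaves them implicit (e.g.\ when it asserts that an irreversible $\beta_j$ must have a collision). Your version is arguably more modular and makes the induction explicit; the paper's version is a single direct construction but quietly relies on the ``if'' direction to know that the prefix $\alpha$ is invertible.
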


\begin{proof}
Let $\delta, \beta_1, \beta_2, \dots$ be the functions of circuits $B, B_1, B_2, \dots$, respectively. We prove that $\delta$ is reversible if and only every $\beta_t$ is reversible.
The ``if\," implication is obvious.

\noindent
The ``only-if\," implication is proved by contrapositive. Suppose that some $\beta_i$ is irreversible, and let $j = \min\{i: \beta_i \text{ is irreversible}\}$, so that $\beta_j(x) = \beta_j(y)$ for some $x\ne y$. Let $\alpha, \gamma$ be the products of functions $\{\beta_i: i<j\}$ and functions $\{\beta_i: i>j\}$, respectively. Our choice of $j$ implies that $\alpha$ is reversible. Let $x' = \alpha^{-1}(x)$ and $y' = \alpha^{-1}(y)$. We have $x'\ne y'$ but
\[ \delta(x') = \gamma(\beta_j(x)) = \gamma(\beta_j(y)) = \delta(y') \]
so that $\delta$ is irreversible.
\end{proof}

\begin{corollary}
A Boolean circuit is reversible if and only if all its gates are reversible.
\end{corollary}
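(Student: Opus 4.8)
The plan is to reduce the corollary to Lemma~\ref{lem:rev} by breaking $B$ into its individual gates. First I would invoke Corollary~\ref{cor:bdecomp2}: choosing any coherent gate linearization $G_1, G_2, \dots, G_N$ of $B$ yields a decomposition $B = B_1 B_2 \dots B_N$ into one-gate factors, where each $B_i$ is the slice of $B$ generated by the single gate $G_i$. By Lemma~\ref{lem:rev}, $B$ is reversible if and only if every factor $B_i$ is reversible. It therefore suffices to show that a one-gate factor $B_i$ is reversible exactly when its gate $G_i$ is reversible, i.e.\ when the Boolean function $\beta_i$ assigned to $G_i$ is reversible.

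Next I would compute the function of a one-gate slice. Let $w$ be the width of $B$, and suppose $G_i$ has arity $r$ and is active on timelines $L_1 < \dots < L_r$, carrying $\beta_i$ of type $\{0,1\}^r \to \{0,1\}^r$. By the definition of slice, on each active timeline the input edge feeds an argument edge of $G_i$ and a value edge of $G_i$ reaches the output, while on each inactive timeline the restored input edge is also the output edge. Hence $\Fun_{B_i}$ acts as $\beta_i$ on the $r$ coordinates indexed by $L_1, \dots, L_r$ and as the identity on the remaining $w - r$ coordinates.

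Then I would observe that such a function is a permutation of $\{0,1\}^w$ if and only if $\beta_i$ is a permutation of $\{0,1\}^r$. If $\beta_i$ is a permutation, then applying $\beta_i$ to the active coordinates and the identity to the rest is a bijection of $\{0,1\}^w$, so $B_i$ is reversible. Conversely, if $\beta_i(x) = \beta_i(y)$ for some $x \ne y$ in $\{0,1\}^r$, then the two inputs to $B_i$ that agree with $x$ and $y$ respectively on the active timelines and agree arbitrarily (but identically) on the inactive ones are distinct yet receive the same value under $\Fun_{B_i}$, so $B_i$ is irreversible. This establishes the desired equivalence for each $i$, and combining it with the first step gives the corollary.

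I expect the only real work to lie in the second and third steps: making precise that the slice's function splits as $\beta_i$ on the active timelines and the identity elsewhere, and that reversibility transfers back and forth across this extension by the identity. Both directions are routine once the slice construction is unwound, so this is the main obstacle only in the sense of bookkeeping; the structural content is supplied entirely by Corollary~\ref{cor:bdecomp2} and Lemma~\ref{lem:rev}.
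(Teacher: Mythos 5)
Your proposal follows the paper's proof exactly: decompose $B$ into one-gate slices via Corollary~\ref{cor:bdecomp2}, reduce to the factors via Lemma~\ref{lem:rev}, and identify reversibility of each one-gate slice with reversibility of its gate. The only difference is that you spell out the last identification (the slice's function is $\beta_i$ on the active timelines and the identity elsewhere), which the paper asserts without elaboration; your argument is correct.
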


\begin{proof}
Consider a circuit $B$ with $N$ gates and recall that the gates are partially ordered by relation ``earlier". Choose a linear order $G_1, G_2, \dots, G_N$ of the gates which respects relation ``earlier". By Corollary~\ref{cor:bdecomp2}, there is a decomposition $B = B_1 B_2 \dots B_N$ where $B_i$ is the slice of $B$ generated by $G_i$ so that $B_i$ is reversible if and only if $G_i$ is reversible. Now apply Lemma~\ref{lem:rev}.
\end{proof}

\section{Quantum circuits} 
\label{sec:quantum}

The syntax of quantum circuits is that of Boolean circuits and is defined in \S\ref{sec:syntax}. In this section, we define the semantics. However, we impose an important constraint on the circuits under consideration.

\begin{proviso}
We restrict attention to unitary quantum circuits, that is quantum circuits where all gate operators are unitary. Below, quantum circuits are by default unitary.
\end{proviso}

\begin{q}
\Qn In other words, a unitary circuit is a circuit without measurements.

\Ar It is essentially so. But unitary operators can be seen as degenerate single-outcome measurements. See for example the general definition of quantum measurements in \S2.2.3 of \cite{NC}. So, more pedantically, a unitary circuit is a circuit where the only measurements are degenerate ones. \tqed
\end{q}

\subsection{Definition}\mbox{} 

In a sense, quantum circuits generalize balanced Boolean circuits. The role of a bit, as a physical system, is played by a \emph{qubit} which is a quantum system with state space $\C^2$ and a fixed orthonormal basis $\{\ket0, \ket1\}$, known as the \emph{computational basis}. The study of Boolean circuits above, in \S\ref{sec:bool}, readily generalizes to the case where two-state physical systems (bits) are replaced with $d$-state physical systems for any fixed positive integer $d$. Similarly, the study of quantum circuits below readily generalizes to the case where qubits are replaced with qudits, with state space $\C^d$, for any fixed positive integer $d$.

Because of entanglement, the Boolean-to-quantum generalization is not straightforward. In particular, the idea of assigning values to the edges of a given circuit is too simplistic for the quantum case.

Dealing with multi-qubit systems, we will always presume that the qubits are distinguishable. The combined system of qubits $Q_1, Q_2, \dots, Q_n$ will be denoted $Q_1\x Q_2\x \dots\x Q_n$.
The state space of the combined system is
$(\C^2)^{\ox n}$ where $(\C^2)^1 = \C^2$ and $(\C^2)^{k+1} = (\C^2)^k \ox \C^2$.

\begin{definition}
  A \emph{(unitary) quantum circuit} $C$ of width $w$ is a balanced syntactic circuit of width $w$ together with the following assignments.
\begin{enumerate}
\item Inputs $1, 2, \dots, w$ are assigned distinct qubits $Q_1, Q_2, \dots, Q_w$ in order. These $w$ qubits are the \emph{qubits of $C$}, and the combined system $Q = Q_1\x Q_2\x \dots\x Q_w$ is the \emph{physical system} of $C$.
\item Each gate $G$ is assigned a unitary operator $U_G$. If the arity of $G$ is $r$ and $G$ encounters timelines $L_1 < L_2 < \dots < L_r$, then $U_G$ operates on (the state space of) the subsystem $Q_{L_1}\x Q_{L_2}\x \dots\x Q_{L_r}$ of $Q$. \tqed
\end{enumerate}
\end{definition}

The intention is that $C$ describes an evolution of its physical system $Q = Q_1\x Q_2\x \dots\x Q_w$.

The state space $(\C^2)^{\ox w}$ of $Q$ has dimension $2^w$. Its orthonormal \emph{computational basis} comprises vectors
\[ \ket{x} = \ket{x_1}\ox \ket{x_2}\ox \cdots\ox \ket{x_w} \]
where $x=(x_1,x_2,\dots,x_w)$ ranges over the set $\{0,1\}^w$ of binary strings of length $w$.
Every permutation $\beta$ of $\{0,1\}^w$ gives rise to a permutation
$ \ket{x} \mapsto \ket{\beta(x)} $
of the basis vectors of $(\C^2)^{\ox w}$ which, by linearity, extends in a unique way to a unitary operator
\[ \sum_{x\in\{0,1\}^n} c_x \ket{x} \mapsto
   \sum_{x\in\{0,1\}^n} c_x \ket{\beta(x)}. \]
The resulting unitary operator often shares the name of the original permutation. Thus we have unitary operators $\cnot$ and $\swap$ on $(\C^2)^{\ox 2}$.

Of course, unitary operators do not have to merely permute the basis vectors. But, by linearity, any unitary operator is determined by its action on the basis vectors.
Here are two important unitary operators on $\C^2$:
\begin{align*}
H\ket0 &= \frac{\ket0+\ket1}{\sqrt2},&
T\ket0 &= \ket0\\
H\ket1 &= \frac{\ket0-\ket1}{\sqrt2},&
T\ket1 &= e^{i\pi/4} \ket1
\end{align*}
The operator $H$ is known as the Hadamard operator.

\subsection{Combinatorics} 
\label{sub:transpose}\mbox{}

We interrupt our circuit exposition in order to prove an auxiliary combinatorial result (which is probably known but we failed to find it in the literature).

Call a linear order $<$ on a poset (partially ordered set) $\s = (S,\prec)$ \emph{coherent} if $a<b$ whenever $a\prec b$.

A linear order $<$ on a finite set $S$ can be transformed into any other linear order $<'$ on $S$ by adjacent transpositions. In other words, there is a sequence $<_1$, $<_2, \dots, <_k$ of linear orders such that $<_1$ is $<$, and $<_k$ is $<'$, and every $<_{i+1}$ is obtained from $<_i$ by transposing one pair of adjacent elements of $<_i$.
The question arises whether, if $<$ and $<'$ are coherent with a partial order $\prec$, the intermediate orders $<_i$ in the transposition sequence can also be taken to be coherent with $\prec$. The following theorem answers this question affirmatively.

\begin{theorem}\label{thm:comb}
Any coherent linear order on a finite poset $\s = (S,\prec)$ can be transformed into any other coherent linear order on $\s$ by adjacent transpositions with all intermediate orders being coherent.
\end{theorem}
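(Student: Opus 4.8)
The plan is to argue by induction on the cardinality of $S$, after first pinning down exactly which adjacent transpositions preserve coherence. First I would observe that if $<$ is a coherent linear order and $a,b$ are $<$-adjacent with $a<b$, then swapping them yields a coherent order if and only if $a$ and $b$ are $\prec$-incomparable. Indeed, the only comparison whose direction is reversed by the swap is the one between $a$ and $b$; coherence of the original order already forbids $b\prec a$ (it would force $b<a$), and coherence of the swapped order forbids $a\prec b$ (it would force $a$ before $b$), so the swap is legal precisely when neither relation holds. Thus the allowed moves are exactly transpositions of adjacent incomparable elements.

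Next, the key lemma I would establish is that any $\prec$-minimal element $m$ can be bubbled to the front of a coherent order by such moves. If $m$ is not already first, let $y$ be the element immediately preceding it; then $y<m$, and since $m$ is $\prec$-minimal we have $y\not\prec m$, while $m\prec y$ would violate coherence. Hence $y$ and $m$ are incomparable, the swap is legal, and repeating it moves $m$ one position earlier at a time until it reaches the front, with every intermediate order coherent.

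With these tools the induction runs smoothly. For the inductive step, let $m$ be the $<'$-minimal element; coherence forces $m$ to be $\prec$-minimal, so by the bubbling lemma I can transform $<$ into a coherent order $<_1$ having $m$ first. Now $<_1$ and $<'$ agree that $m$ comes first; restricting both to the sub-poset on $S\setminus\{m\}$ gives two coherent linear orders on a smaller poset, which by the induction hypothesis are connected by a coherence-preserving sequence of adjacent transpositions. Each such transposition on $S\setminus\{m\}$ lifts verbatim to an adjacent, coherence-preserving transposition on $S$, since it only moves elements lying after the fixed first element $m$; this carries $<_1$ to a coherent order that agrees with $<'$ off $m$ and keeps $m$ first, that is, to $<'$ itself. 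The base case $|S|\le 1$ is vacuous.

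The main obstacle I anticipate is not the induction but the bookkeeping in the first two steps: verifying that a $\prec$-minimal element never becomes blocked by a comparable neighbor during bubbling, and confirming that adjacency is genuinely preserved when lifting moves from $S\setminus\{m\}$ back to $S$. Once the characterization of legal moves and the bubbling lemma are nailed down, the remainder is routine.
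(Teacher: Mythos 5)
Your proof is correct, but it follows a genuinely different route from the paper's. The paper measures the distance $D(<,<')$ between two coherent linear orders by counting the pairs on which they disagree, observes that any such differentiating pair is $\prec$-incomparable, shows that when $D\ge1$ some differentiating pair is adjacent in $<$, and then inducts on $D$: one legal transposition reduces the distance by exactly one. Your argument instead inducts on $|S|$: you first characterize the coherence-preserving moves as exactly the transpositions of adjacent $\prec$-incomparable elements, then prove a bubbling lemma moving any $\prec$-minimal element to the front, and finally recurse on $S\setminus\{m\}$ where $m$ is the first element of $<'$. Both arguments are elementary and complete; your checks (that $m$ is $\prec$-minimal because $<'$ is coherent, that bubbling never gets blocked, and that moves on the sub-poset lift to adjacent moves on $S$ because $m$ sits in front) are all sound. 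The paper's inversion-counting argument is more local and makes it immediate that the transformation can be done in exactly $D(<,<')$ transpositions, which is clearly optimal; your selection-sort-style recursion is more modular -- the characterization of legal moves and the bubbling lemma are reusable facts -- at the cost of the extra bookkeeping you yourself flag about restricting to and lifting from the sub-poset.
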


\begin{proof}
Fix a finite poset $(S,\prec)$. We start with an observation that if two elements $u,v$ are ordered differently by two coherent linear orders then $u,v$ are incomparable by $\prec$. Indeed, if $u,v$ were comparable then one of the two linear orders would not be coherent.

Define the distance $D(<,<')$ between two coherent linear orders $<$ and $<'$ to be the number of $(<,<')$ differentiating pairs $u,v$ such that $u<v$ but $v<'u$. We claim that if $D(<,<')\ge1$ then there is a  $(<,<')$ differentiating pair $u,v$ such that $u,v$ are adjacent in ordering $<$. It suffices to prove that if $u,v$ is a $(<,<')$ differentiating pair and $u<w<v$ then either $u,w$ or $w,v$ is a  $(<,<')$ differentiating pair, so that $w<'u$ or $v<'w$. But this is obvious. If $u<'w<'v$ then $u<'v$ which is false.

We prove the theorem by induction on the distance $D(<,<')$ between two given coherent linear orders $<$ and $<'$.
If $D(<,<')=0$, the two orders are identical and there is nothing to prove. Suppose $D(<,<')=d\ge1$.

By the claim above there exist $u<v$ such that $u,v$ are adjacent in $<$ but $v<'u$.
By the observation above, $u,v$ are incomparable by $\prec$. Let $<''$ be the order obtained from $<$ by transposing the adjacent elements $u$ and $v$. $<''$ is coherent because $u,v$ is the only $(<,<'')$ differentiating pair and because $u,v$ are incomparable by $\prec$.

It remains to prove that $<''$ can be transformed into $<'$  by adjacent transpositions with all intermediate linear orders respecting $\prec$. But this follows from the induction hypothesis. Indeed, $D(<'',<')=d-1$ because the $(<'',<')$ differentiating pairs are the same as the $(<,<')$ differentiating pairs, except for $u,v$.
\end{proof}

\subsection{Quantum circuit computations}\mbox{}

Consider a quantum circuit $C$ with physical system $Q$. Assume that initially $Q$ is in state \ket{\Psi_0}. Computationally, \ket{\Psi_0} is an \emph{input} of $C$.

We say that a $C$ gate $G$ is \emph{active at qubit $Q_i$} if $G$ encounters timeline $L_i$. It will be convenient to view the unitary operator $U_G$ operating on the whole system $Q$ and not only on the subsystem formed by the qubits where $G$ is active at; $U_G$ works as the identity operator at every qubit $Q_j$ where $G$ is not active.

If gates $G_1, G_2, \dots, G_n$ of $C$ form an antichain $X$ then they are active at disjoint sets of qubits and therefore the operators $U_{G_1}, U_{G_2}, \dots, U_{G_n}$ commute and can be executed in an arbitrary order or simultaneously; their combined action is a unitary operator which will be denoted $U_X$.

Any computation of $C$ on the given input \ket{\Psi_0} works in sequential time, step after step. At each step an antichain of gates is executed. This determines a coherent partition of the $C$ gates into antichains $X_1, X_2, \dots, X_T$ where $X_t$ comprises the gates executed at step $t$ and $T$ is the number of steps. (This is why we chose the symbols $t$ and $T$, as allusions to ``time''.) Any coherent antichain partition $X_1, X_2, \dots, X_T$ of the $C$ gates uniquely determines a computation of $C$ on the given input where $U_{X_1}, U_{X_2}, \dots$ are executed in order; and so the computation can be identified with the coherent antichain partition.

Any computation $X_1, X_2, \dots, X_T$ of $C$ on the given input \ket{\Psi_0} determines a sequence
\[ \ket{\Psi_0}, \ket{\Psi_1}, \dots, \ket{\Psi_T} \]
of states of $Q$ where every $\ket{\Psi_{t+1}} = U_{X_{t+1}}\ket{\Psi_t}$, and \ket{\Psi_T} is the final state and the \emph{output} of the computation.

\begin{theorem}
For any quantum circuit $C$, any two computations of $C$ on any fixed input \ket{\Psi_0} produce the same output.
\end{theorem}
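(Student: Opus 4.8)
The plan is to identify each computation with its coherent antichain partition $X_1, X_2, \dots, X_T$ and its output with the state $U_{X_T}\cdots U_{X_2}U_{X_1}\ket{\Psi_0}$, and then to show that this state does not depend on the partition chosen. I would carry this out in two reductions followed by an application of Theorem~\ref{thm:comb}.

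First I would reduce from antichain partitions to gate linearizations. Within a single antichain $X_t$ the gates are active on pairwise disjoint sets of qubits, so the operators $\{U_G : G\in X_t\}$ commute and $U_{X_t}$ equals their product taken in any order. Hence I may refine the step $X_t$ into single-gate steps, firing its gates one at a time in any order, without changing $U_{X_t}$ and therefore without changing the output. Doing this for every step converts an arbitrary computation into a computation given by a coherent gate linearization $G_1, G_2, \dots, G_N$ with the same output, namely $U_{G_N}\cdots U_{G_1}\ket{\Psi_0}$. Thus it suffices to prove that this product is the same for any two coherent gate linearizations of the gates of $C$.

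Next I would invoke Theorem~\ref{thm:comb}, taking the poset to be the gates of $C$ ordered by the relation ``earlier''. It lets me pass from one coherent linearization to another through a sequence of adjacent transpositions whose intermediate orders are all coherent. The key observation is that each such transposition swaps two gates that are \emph{incomparable} under ``earlier'': if the swapped adjacent pair were comparable, the transposed order would reverse their ``earlier'' relation and so fail to be coherent. Two incomparable gates cannot be active on a common timeline---any two gates active on the same timeline are joined by a path along that timeline and hence comparable---so they are active on disjoint sets of qubits and their operators commute. Since the two gates are adjacent in the linearization, the corresponding factors $U_{G_i}$ and $U_{G_{i+1}}$ are adjacent in the product $U_{G_N}\cdots U_{G_1}$; transposing them leaves the product, and hence the output, unchanged. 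Composing the invariances across the whole transposition sequence shows that the two coherent linearizations yield the same output.

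The main obstacle is this commutativity step, and it rests on the geometric fact that incomparability of gates forces disjointness of the qubits on which they act. This is the same phenomenon already used to define $U_X$ for an antichain $X$; what makes Theorem~\ref{thm:comb} essential here is that it guarantees the passage between any two coherent linearizations can be realized using only transpositions of adjacent incomparable gates, so that at each step nothing is needed beyond the commutativity of operators acting on disjoint qubits. Everything else is bookkeeping.
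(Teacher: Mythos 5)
Your proposal is correct and follows essentially the same route as the paper: first reduce an arbitrary computation to a coherent gate linearization using commutativity of operators within an antichain, then use Theorem~\ref{thm:comb} to connect any two coherent linearizations by adjacent transpositions of incomparable gates, whose operators commute because incomparable gates act on disjoint qubits. The only cosmetic difference is that the paper performs the first reduction by induction on a ``linearity deficit'' (splitting off one gate per step) while you refine each antichain all at once; the substance is identical.
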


\begin{proof}
In this proof, a computation means a computation of $C$ on input \ket{\Psi_0}. Two computations will be called equivalent if they produce the same input. A computation will be called linear if a single gate is executed at every step.

First we notice that every computation is equivalent to a linear computation. For the sake of completeness, we provide a proof of this claim. Define the linearity deficit $D(\X)$ of a computation $\X$ to be the number of gates executed at the same step of $\X$ as at least one other gate. It suffices to show that every computation $\X$ with positive $D(\X)$ is equivalent to a computation $\X'$ with $D(\X') < D(\X)$. To this end, let $\X$ be a computation with $D(\X)>0$ and $X$ the first antichain in $\X$ with at least two gates. Further, let $G$ be any of the $X$ gates and $Y = X - \{G\}$. Split step $X$ into two smaller steps: first execute $G$ and then all the $Y$ gates, and let $\X'$ be the resulting computation. Clearly $D(\X') < D(\X)$. Since $X$ is an antichain, the unitary transformations $U_G$ and $U_Y$ commute, and therefore the product $U_Y U_G = U_X$. It follows that $\X'$ is equivalent to $\X$.

It remains to prove that any two linear computations are equivalent. Any linear computation $\X = (G_1, G_2, \dots, G_N)$ constitutes a linear ordering that is coherent in the sense that it respects the relation ``earlier" on the gates. By Theorem~\ref{sub:transpose}, any coherent linearization can be transformed to any other coherent linearization by adjacent transpositions with all intermediate orders being coherent. Accordingly it suffices to prove linear computations $\X$ and $\X'$ are equivalent if $\X'$ is obtained from $\X$ by one adjacent transposition. To prove that, let $\X'$ be obtained from $\X = (G_1, G_2, \dots, G_N)$ by transposing gates $G_i$ and $G_{i+1}$. These two gates are incomparable. Indeed, if $G_i$ were earlier than $G_{i+1}$ then $\X'$ would be incoherent, and if $G_{i+1}$ were earlier than $G_i$ then $\X$ would be incoherent. It follows that the two gates are active on disjoint sets of qubits of $C$. Therefore $U_{G_i}$ and $U_{G_{i+1}}$ commute, $U_{G_i} U_{G_{i+1}} = U_{G_{i+1}} U_{G_i}$, and therefore the two linear computations are equivalent.
\end{proof}

\subsection{Generalizations}\mbox{} 
\label{sub:gen}

\Qn If I understood you correctly, the Boolean semantics of \S\ref{sec:bool} easily generalizes from the standard 2-valued variables (bits) to $d$-valued variables, for any fixed $d$. And quantum semantics easily generalizes from qubits, with state space $\C^2$, to qudits, with state space $\C^d$, for any fixed $d$. I think I understand the generalizations. They require no change of syntactic circuits. By the way, I found some papers that actually use qudits with $d>2$ \cite{BRS,CAB}.

But why keep $d$ fixed? Let $d$ vary.
In a classical (in contrast to quantum) circuit, one input node could be 2-valued while another could be 3-valued. A quantum gate might have a qubit-carrying incoming edge and a qutrit-carrying one. The details seem easy to fix. Am I missing something?

\Ar Reversibility imposes an important constraint. Consider a gate $G$ with $k$ incoming edges of capacities $a_1, a_2, \dots, a_k$ and with $l$ outgoing edges of capacities $b_1, b_2, \dots, b_k$. The constraint is
\begin{equation}\label{eq:1}
a_1 a_2 \cdots a_k = b_1 b_2 \cdots b_l.
\end{equation}

\Qn Explain.

\Ar In the classical case, let $\beta$ be the function assigned to $G$. In order for $\beta$ to be reversible, the domain and codomain of $\beta$ should have exactly the same number of elements. In other words, the number $a_1 a_2 \cdots a_k$ of possible input tuples should be equal to the number $b_1b_2\cdots b_l$ of possible output tuples. Hence equation~\eqref{eq:1}. When all the capacities were just 2, this meant simply that $k=l$, as in our definition of balanced circuits, the underlying syntactic circuits of reversible Boolean circuits. But with general capacities, the equation~\eqref{eq:1} does not reduce to $k=l$.

In the quantum case the situation is similar. Let $U$ be the transformation assigned to $G$. The domain of $U$ is $\C^{a_1}\ox \C^{a_2}\ox \dots \C^{a_k}$, and the codomain of $U$ is $\C^{b_1}\ox \C^{b_2}\ox \dots \C^{b_l}$.
In order for $U$ to be reversible, the dimension $a_1 a_2 \cdots a_k$ of the domain and the dimension $b_1 b_2 \cdots b_l$ of the codomain should be equal. Hence equation~\eqref{eq:1}.

\end{document}